\newcommand{\ket}[1]{{\left| #1 \right\rangle}}
\newtheorem{Thm}{Theorem}
\newtheorem{Cor}[Thm]{Corollary}
\newtheorem{Lem}[Thm]{Lemma}
\newtheorem{Prop}[Thm]{Proposition}
\begin{document}
\title{Forgeable quantum messages
in arbitrated quantum signature schemes}

\author{Taewan Kim}
\affiliation{
 Electronics and Telecommunications Research Institute, 
 Daejeon 34129, Korea
}
\author{Hyang-Sook Lee}
\affiliation{
 Department of Mathematics,
 Ewha Womans University, Seoul 03760, Korea
}
\author{Soojoon Lee}\email{level@khu.ac.kr}
\affiliation{
 Department of Mathematics and Research Institute for Basic Sciences,
 Kyung Hee University, Seoul 02447, Korea
}
\date{\today}

\begin{abstract}
Recently, the concept on `forgeable quantum messages' in arbitrated quantum signature schemes 
was introduced by T.~Kim {\it et al.}~[Phys. Scr., {\bf 90}, 025101 (2015)], 
and it has been shown that 
there always exists such a forgeable quantum message 
for every known arbitrated quantum signature scheme 
with four quantum encryption operators and the specific two rotation operators. 
We first extend the result to the case of any two unitary rotation operators,
and then consider the forgeable quantum messages in the schemes 
with four quantum encryption operators and three or more rotation operators.
We here present a necessary and sufficient condition for existence of a forgeable quantum message,
and moreover, by employing the condition, 
show that there exists an arbitrated quantum signature scheme 
which contains no forgeable quantum message-signature pairs.
\end{abstract}

\pacs{
03.67.Dd, 
03.67.Hk 
}
\maketitle

\section{Introduction}
\label{sec:intro}
It has been known that 
quantum cryptography provides us 
with unconditional security in key distribution.  
In order to extend the range of 
quantum cryptographic primitives, 
other quantum-mechanics-based cryptographic schemes such as signature schemes
should be studied and developed.

Even though quantum mechanics can help in digitally signing classical messages~\cite{GC}, 
and its techniques have been developed~\cite{WCR,WMC}, 
it unfortunately turns out that 
it is not possible to digitally sign quantum messages~\cite{BCGST}.
On the other hand, it has been shown that 
it could be possible to quantumly sign quantum messages 
with the assistance of an arbitrator, 
which is called the arbitrated quantum signature 
(AQS)~\cite{ZK,LCL,CM,ZQ,GQGW,CCH,LLWZC,ZZL,SL,ZQSSS,ZLS,KCJL}. 

However, it has been also shown that 
the AQS schemes can be vulnerable to 
receiver's forgery attacks~\cite{GQGW,CCH,ZZL,ZQSSS,ZLS,KCJL}.
In particular, according to the result in Ref.~\cite{KCJL}, 
for almost all known AQS schemes, 
there are forgeable quantum message-signature pairs
which can be forged by the receiver, 
although all other pairs cannot be forged. 
Furthermore, there has not yet been an analytical proof 
that there exists an AQS scheme 
without any forgeable quantum message-signature pairs.

In this paper, we first show that 
there is always a forgeable quantum message-signature pair
in an AQS scheme with two quantum random rotations, 
and present a necessary and sufficient condition 
for a given AQS scheme with four quantum encryption operators 
and three or more quantum random rotations 
to produce at least one forgeable quantum message. 
By using this condition, we analytically prove that 
there exists an AQS scheme 
without any forgeable quantum message-signature pairs.

\section{Forgeable quantum messages}
\label{sec:2}
Most qubit-based AQS schemes have two quantum signature operators, 
the random rotation $\{R_j\}_{j\in\mathbb{Z}_2}$ defined by
two Pauli operators $\sigma_x$ and $\sigma_z$, that is,
$R_{0}=\sigma_x$ and $R_{1}=\sigma_z$,
and the quantum encryption $\{E_k\}_{k\in\mathbb{Z}_4}$~\cite{BR} 
with unitary operator $E_k$
satisfying 
\begin{equation}
\frac{1}{4}\sum_{k\in\mathbb{Z}_4}E_{k}\rho E_{k}^\dagger = \frac{1}{2}I
\label{eq:E_k}
\end{equation}
for any qubit state $\rho$.
It follows that
\begin{equation}
\frac{1}{8}\sum_{j\in\mathbb{Z}_2}\sum_{k\in\mathbb{Z}_4} 
E_{k}R_{j}\rho R_{j}^\dagger E_{k}^\dagger = \frac{1}{2}I.
\label{eq:E_kR_j}
\end{equation}
Thus, for a given quantum message $\ket{M}$ and a previously shared key $(j,k)$, 
its signature $\ket{S}$ becomes
\begin{equation}
\ket{S}=E_k R_j \ket{M}.
\label{eq:Sign}
\end{equation}
Let $\left(\ket{M'},\ket{S'}\right)$ be the transmitted message-signature pair. 
Then we say that the signature is valid if
\begin{equation}
\ket{M'}\simeq R_j^{\dagger}E_k^{\dagger}\ket{S'},
\label{eq:valid}
\end{equation}
where $\ket{A}\simeq \ket{B}$ if and only if $\ket{A}$ equals $\ket{B}$ up to global phase.
In other words, 
\begin{equation}
\ket{M'}=e^{i\theta_{jk}} R_j^{\dagger}E_k^{\dagger}\ket{S'}
\label{eq:valid_gp}
\end{equation}
for some real number $\theta_{jk}$.
We note that
one can determine with high probability 
whether the signature is valid or not,
by using the swap test~\cite{BCWW} 
for proper number of copies of the states.

In addition, we remark that 
quantum encryptions in all existing AQS schemes
can be essentially reduced to the quantum encryption with four unitary operators 
\begin{equation}
E_k =\sigma_k W
\label{eq:q_encryption}
\end{equation}  
for some uniatry operator $W$, 
which is called 
an {\em assistant unitary operator} of the AQS scheme~\cite{ZLS},
where $\sigma_{0}=I$, $\sigma_{1}=\sigma_{x}$, $\sigma_{2}=\sigma_{y}$ and $\sigma_{3}=\sigma_{z}$.

Recently, Kim {\it et al.}~\cite{KCJL} proposed a new concept, {\em forgeable quantum messages} as follows.
For a given AQS scheme 
with the unitary random rotation $\{R_j\}$ and 
quantum encryption of the form in Eq.~(\ref{eq:q_encryption}),
there exist a quantum message $\ket{M_0}$,
a non-identity unitary $Q$ and a unitary $U$
such that
\begin{equation}
R_j^{\dagger}W^{\dagger}\sigma_k^{\dagger}Q\sigma_{k}WR_{j}\ket{M_0}\simeq U\ket{M_0}
\label{eq:forgeable}
\end{equation}
for all $j\in\mathbb{Z}_2$ and $k\in\mathbb{Z}_4$ 
if and only if 
the quantum message $\ket{M_0}$ 
is said to be {\em forgeable} in the AQS scheme.
This implies that
the receiver can forge at least one quantum message and its signature 
by exploiting a pair of forgery operators $Q$ and $U$, 
although all other quantum message-signature pairs cannot be forged.

\section{Main results}\label{sec:3}
\begin{Thm}\label{Thm1}
Assume that an AQS scheme consists of
the random rotation $\{R_j\}_{j\in\mathbb{Z}_2}$ defined by two unitary operators
and a quantum encryption $\{\sigma_k W\}_{k\in\mathbb{Z}_4}$
with an assistant unitary operator $W$.
Then there exists at least one forgeable qubit message $\ket{M_0}$ 
in the AQS scheme.
\end{Thm}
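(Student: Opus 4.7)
The plan is to construct a forgeable qubit message explicitly by choosing $Q$ to be a non-identity Pauli operator, exploiting the fact that any two Pauli matrices either commute or anticommute. First I would set $Q = \sigma_1$ (any of $\sigma_1,\sigma_2,\sigma_3$ works by symmetry) and verify the elementary identity $\sigma_k^\dagger \sigma_1 \sigma_k = \epsilon_k\,\sigma_1$ with $\epsilon_k\in\{+1,-1\}$ for each $k\in\mathbb{Z}_4$ (specifically $\epsilon_0=\epsilon_1=+1$, $\epsilon_2=\epsilon_3=-1$). This collapses the index $k$ in condition~(\ref{eq:forgeable}): the left-hand side becomes $\epsilon_k\, R_j^\dagger W^\dagger \sigma_1 W R_j \ket{M_0}$, and the global sign $\epsilon_k$ is absorbed into the equivalence relation $\simeq$. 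Hence the eight conditions (over $j\in\mathbb{Z}_2$, $k\in\mathbb{Z}_4$) reduce to only two conditions indexed by $j$.

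Next, I would set $U := R_0^\dagger W^\dagger \sigma_1 W R_0$, a well-defined qubit unitary. With this choice the $j=0$ case is trivially satisfied, and the only remaining requirement is
\[
R_1^\dagger W^\dagger \sigma_1 W R_1 \ket{M_0} \;\simeq\; R_0^\dagger W^\dagger \sigma_1 W R_0 \ket{M_0},
\]
which is equivalent to demanding that $\ket{M_0}$ be an eigenvector of the single qubit unitary
\[
V \;:=\; \bigl(R_0^\dagger W^\dagger \sigma_1 W R_0\bigr)^\dagger\, R_1^\dagger W^\dagger \sigma_1 W R_1.
\]

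Finally, since $V$ acts on the two-dimensional space $\mathbb{C}^2$, the fundamental theorem of algebra applied to its characteristic polynomial guarantees that $V$ has at least one eigenvector, which I would take as $\ket{M_0}$. Because $Q=\sigma_1$ is manifestly not proportional to the identity, the triple $(\ket{M_0},Q,U)$ meets every requirement of the forgeability definition, completing the proof. The key conceptual step is the Pauli-sign cancellation that eliminates the $k$-dependence; after that, what remains is the elementary fact that every unitary on $\mathbb{C}^2$ has an eigenvector, so there is no substantive obstacle. The freedom to choose $Q$ among $\sigma_1,\sigma_2,\sigma_3$ and to pick either eigenvector of $V$ provides additional slack that I do not expect to be needed, but which makes clear that forgeable messages are quite common in this setting.
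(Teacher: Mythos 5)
Your proposal is correct and follows essentially the same route as the paper's own proof: choose $Q=\sigma_1$, use $\sigma_k^\dagger\sigma_1\sigma_k\simeq\sigma_1$ to eliminate the $k$-dependence, and take $\ket{M_0}$ to be an eigenvector of the unitary $\bigl(R_0^\dagger W^\dagger\sigma_1 W R_0\bigr)^\dagger R_1^\dagger W^\dagger\sigma_1 W R_1$, whose existence the paper gets from the spectral decomposition and you get equivalently from the characteristic polynomial. Your explicit choice of $U$ and the sign bookkeeping $\epsilon_k$ are just slightly more detailed renderings of the same argument.
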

\begin{proof}
We here take $Q=\sigma_1$. 
Then for any unitary $W$, we note that
\begin{eqnarray}
R_{0}^{\dagger}W^{\dagger}\sigma_{k}\sigma_{1}\sigma_{k}WR_{0}
&\simeq&R_{0}^{\dagger}W^{\dagger}\sigma_{1}WR_{0},\nonumber\\
R_{1}^{\dagger}W^{\dagger}\sigma_{k}\sigma_{1}\sigma_{k}WR_{1}
&\simeq&R_{1}^{\dagger}W^{\dagger}\sigma_{1}WR_{1}
\label{eq:simeq_prop}
\end{eqnarray}
for all $k\in\mathbb{Z}_4$,
since 
$\sigma_{1}\simeq \sigma_{k}\sigma_{1}\sigma_{k}$ for all $k\in\mathbb{Z}_4$.
Thus it suffices to show that 
\begin{equation}
R_{1}^{\dagger}W^{\dagger}\sigma_{1}WR_{1}\ket{M}
\simeq R_{0}^{\dagger}W^{\dagger}\sigma_{1}WR_{0}\ket{M}
\label{eq:twoRotation0}
\end{equation}
for some qubit state $\ket{M}$.

Since the unitary operator
\begin{equation}
\left(R_{0}^{\dagger}W^{\dagger}\sigma_{1}WR_{0}\right)^{\dagger}
\left(R_{1}^{\dagger}W^{\dagger}\sigma_{1}WR_{1}\right)
\end{equation} 
has an eigenstate with a complex eigenvalue of modulus one
by the spectral decomposition theorem,
there exists a message $\ket{M_0}$ such that
\begin{equation}
\left(R_{0}^{\dagger}W^{\dagger}\sigma_{1}WR_{0}\right)^{\dagger}
\left(R_{1}^{\dagger}W^{\dagger}\sigma_{1}WR_{1}\right) \ket{M_0}
\simeq \ket{M_0},
\label{eq:eigenvector}
\end{equation}
that is,
\begin{equation}
R_{1}^{\dagger}W^{\dagger}\sigma_{1}WR_{1}\ket{M_0}
\simeq R_{0}^{\dagger}W^{\dagger}\sigma_{1}WR_{0}\ket{M_0}.
\label{eq:twoRotation}
\end{equation}
\end{proof}
For example, we assume that an AQS scheme consists of
the random rotation $\{R_j\}_{j\in\mathbb{Z}_2}$ defined by two Pauli operators ${\sigma}_x$ and ${\sigma}_z$ and a quantum encryption $\{\sigma_k W\}_{k\in\mathbb{Z}_4}$
with an assistant unitary operator $W$.
Then there exists at least one forgeable qubit message $\ket{M_0}$
by Theorem~\ref{Thm1}.
It means that there exists at least one forgeable qubit message
in most of the previously known AQS schemes.

We now introduce the following basic property in linear algebra~\cite{Sh} 
to obtain our main results.
\begin{Prop}\label{Prop2}
Two $r\times r$ matrices $A$ and $B$ have a common eigenvector
if and only if 
\begin{eqnarray}
\bigcap_{k,l=1}^{r-1}\ker{(A^kB^l-B^lA^k)}\neq\{0\}, 
\label{eq:Prop2}
\end{eqnarray}
where, for an $r\times r$ matrix $C$, $\ker(C)=\{x\in\mathbb{C}^r|Cx=0\}$ 
is called the kernel of $C$.
\end{Prop}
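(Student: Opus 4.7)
The plan is to handle the two directions separately. The forward direction is immediate: if $v \neq 0$ satisfies $Av = \alpha v$ and $Bv = \beta v$ simultaneously, then $A^k B^l v = \alpha^k \beta^l v = B^l A^k v$ for every $k$ and $l$, so $v$ lies in every kernel appearing on the left-hand side of (\ref{eq:Prop2}) and the intersection is nontrivial.

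The real work is the converse. Write $V = \bigcap_{k,l=1}^{r-1}\ker(A^k B^l - B^l A^k)$ and assume $V \neq \{0\}$. The first step is to invoke the Cayley--Hamilton theorem: every power $A^m$ is a polynomial in $A$ of degree at most $r-1$, and likewise for $B$. Consequently the defining condition $(A^kB^l - B^lA^k)v = 0$ for $1 \leq k, l \leq r-1$ extends, by linearity, to $p(A) q(B) v = q(B) p(A) v$ for arbitrary polynomials $p$ and $q$ and every $v \in V$.

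Next I would verify that $V$ is invariant under both $A$ and $B$. For $v \in V$ and arbitrary polynomials $p, q$, setting $\tilde{p}(x) = x p(x)$ one computes
\[
p(A) q(B) (Av) = \tilde{p}(A) q(B) v = q(B) \tilde{p}(A) v = q(B) p(A)(Av),
\]
where the middle equality invokes the polynomial-commutation property enjoyed by vectors in $V$. Hence $Av \in V$, and symmetrically $Bv \in V$. Since $V$ is a nonzero subspace of $\mathbb{C}^r$ invariant under $A$, the restriction $A|_V$ has an eigenvector with some eigenvalue $\alpha$; set $V_\alpha = \ker(A - \alpha I) \cap V$. For any $w \in V_\alpha$, we have $Bw \in V$ and $A(Bw) = B(Aw) = \alpha(Bw)$, so $V_\alpha$ is a nonzero $B$-invariant subspace. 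Therefore $B|_{V_\alpha}$ admits an eigenvector $u$, which is simultaneously an eigenvector of $A$ with eigenvalue $\alpha$, giving the desired common eigenvector.

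The main obstacle is the Cayley--Hamilton upgrade: without it the commutator hypothesis is only a finite list of conditions, whereas the invariance argument genuinely needs the full polynomial algebras in $A$ and $B$ to commute on $V$. Once that upgrade is in place, the argument reduces to the classical fact that two commuting operators on a nonzero finite-dimensional complex vector space share an eigenvector, which is essentially just a two-step application of the existence of eigenvalues over $\mathbb{C}$.
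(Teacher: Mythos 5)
Your proof is correct, but note that the paper does not actually prove Proposition~\ref{Prop2}: it is stated as a known linear-algebra fact and attributed to Shemesh~\cite{Sh}, so there is no in-paper argument to compare against. What you have written is essentially Shemesh's original proof of that criterion: the forward direction is the trivial one, and for the converse you use Cayley--Hamilton to upgrade the finitely many commutator conditions defining $V=\bigcap_{k,l=1}^{r-1}\ker(A^kB^l-B^lA^k)$ to commutation of all polynomials in $A$ with all polynomials in $B$ on $V$, deduce that $V$ is invariant under both $A$ and $B$, and then apply the standard two-step eigenvector argument (an eigenspace of $A|_V$ is $B$-invariant because $A$ and $B$ commute on $V$) to extract a common eigenvector. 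One cosmetic point: in your displayed chain $p(A)q(B)(Av)=\tilde{p}(A)q(B)v=q(B)\tilde{p}(A)v=q(B)p(A)(Av)$, the \emph{first} equality already uses the polynomial-commutation property on $V$ (you must move $A$ past $q(B)$ acting on $v$), not only the middle one; since $v\in V$ this is justified, so the argument is sound as long as you flag that both steps invoke the hypothesis.
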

We remark that $\ker(C)=\{0\}$ if and only if $C$ is invertible, 
and hence two $2\times 2$ matrices $A$ and $B$ have a common eigenvector 
if and only if 
$\ker(AB-BA)\neq\{0\}$, that is, $\det{(AB-BA)}=0$. 
Since we here deal with only $2\times 2$ matrices, 
we obtain the following lemma 
by employing the above remark and Proposition~\ref{Prop2}.
\begin{Lem}\label{Lem1}
Assume that an AQS scheme consists of
the random rotation $\{\tilde{R}_j\}_{j\in\{1,2,3\}}$ with $\tilde{R}_j\in\{\sigma_l: l\in\mathbb{Z}_4\}$ 
and a quantum encryption $\{\sigma_k W\}_{k\in\mathbb{Z}_4}$
with an assistant unitary operator $W$, 
and that there exists a forgeable quantum message, 
that is, there exist a quantum message $\ket{M_0}$, 
forgery operators $Q$ and $U$ such that
\begin{equation}
\tilde{R}_j^{\dagger}W^{\dagger}\sigma_k^{\dagger}Q\sigma_{k}W\tilde{R}_{j}\ket{M_0}\simeq U\ket{M_0}
\label{eq:forge_Q},
\end{equation}
where 
\begin{equation}
Q=q_0 \sigma_0+iq_1 \sigma_1-iq_2 \sigma_2 +iq_3 \sigma_3
\label{eq:Q}
\end{equation}
for $q_l \in\mathbb{R}$, $q_0\ge 0$ with $\sum_{l\in\mathbb{Z}_4}q_l^2 =1$. 
Then $q_l = q_{l'}=0$ for some distinct $l$ and $l'$ in $\mathbb{Z}_4$.
\end{Lem}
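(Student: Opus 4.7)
First I would specialize the forgery condition~\eqref{eq:forge_Q} to a single rotation index. For any fixed $j_0\in\{1,2,3\}$, set $\ket{\phi} := W\tilde{R}_{j_0}\ket{M_0}$ and $Q_k := \sigma_k Q\sigma_k$ (using that each Pauli is Hermitian). Rearranging~\eqref{eq:forge_Q} gives $Q_k\ket{\phi}\simeq W\tilde{R}_{j_0}U\ket{M_0}$ for every $k\in\mathbb{Z}_4$, so the four vectors $\{Q_k\ket{\phi}\}_{k\in\mathbb{Z}_4}$ are pairwise proportional. The lemma therefore reduces to a purely linear-algebraic claim: \emph{the existence of any qubit state $\ket{\phi}$ with all $Q_k\ket{\phi}$ pairwise proportional already forces at least two of $q_0,q_1,q_2,q_3$ to vanish.} The structural input is $\sigma_k\sigma_l\sigma_k=\pm\sigma_l$, with sign $+$ exactly when $k\in\{0,l\}$; consequently the four $Q_k$ all share the same $\sigma_0$-coefficient $q_0$ and differ only by sign flips on their $\sigma_1,\sigma_2,\sigma_3$-coefficients, running through the four sign triples $(s_1,s_2,s_3)\in\{\pm 1\}^3$ with $s_1s_2s_3=+1$. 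The two consequences I would exploit are the averaging identity $\sum_{k}Q_k = 4q_0\, I$ and the pair-sum identities $Q_0+Q_j = 2q_0 I \pm 2iq_j\sigma_j$ for $j=1,2,3$, which isolate one Pauli term at a time.

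I would then case-split on the common direction $\ket{v}$ of the $Q_k\ket{\phi}$. In \emph{Case~A}, $\ket{v}\propto\ket{\phi}$, i.e.~$\ket{\phi}$ is a common eigenvector of all $Q_k$. Expanding $\sigma_j\ket{\phi} = x_j\ket{\phi}+y_j\ket{\phi^\perp}$ and demanding that the $\ket{\phi^\perp}$-coefficient of each $Q_k\ket{\phi}$ vanish yields a $4\times 3$ linear system across the four sign patterns whose only solution is $q_j y_j = 0$ for $j=1,2,3$; since no qubit is a simultaneous eigenvector of two distinct Paulis, at least two of $y_1,y_2,y_3$ are nonzero, forcing at least two of $q_1,q_2,q_3$ to vanish. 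In \emph{Case~B}, $\ket{v}\not\propto\ket{\phi}$; summing the proportionalities $Q_k\ket{\phi}=\lambda_k\ket{v}$ over $k$ and invoking $\sum_k Q_k = 4q_0 I$ gives $4q_0\ket{\phi} = \bigl(\sum_k\lambda_k\bigr)\ket{v}$, so linear independence of $\ket{\phi}$ and $\ket{v}$ forces $q_0=0$. With $q_0=0$, each pair-sum $Q_0+Q_j$ collapses to $\pm 2iq_j\sigma_j$, so $\sigma_j\ket{\phi}\propto\ket{v}$ whenever $q_j\neq 0$; if all three of $q_1,q_2,q_3$ were nonzero, the three states $\sigma_j\ket{\phi}$ would be pairwise proportional, making $\ket{\phi}$ simultaneously an eigenvector of $\sigma_1\sigma_2\simeq\sigma_3$ and $\sigma_2\sigma_3\simeq\sigma_1$, which is impossible. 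Hence at most two of $q_1,q_2,q_3$ are nonzero, and in either case at least two of the $q_l$ vanish.

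The main obstacle is Case~B: one has to recognize the averaging identity $\sum_k Q_k = 4q_0 I$ (a consequence of the $s_1 s_2 s_3 = +1$ combinatorics of the four $Q_k$) as the correct device to force $q_0=0$, and then use the pair-sum combinatorics to isolate each $\sigma_j\ket{\phi}$ individually before appealing to Pauli noncommutativity. Proposition~\ref{Prop2} offers an alternative route, by converting the common-eigenvector requirement on pairs from $\{Q^{-1}Q_k\}$ into commutator-determinant equations in $Q$, but the direct analysis above appears shorter and avoids writing those determinants explicitly.
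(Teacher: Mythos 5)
Your proposal is correct, and it reaches the conclusion by a genuinely different route than the paper. The paper also uses only one rotation index (effectively $j=j'=1$) but works with products of the forgery operators: it observes that $\ket{M_0}$ is a common eigenvector of the two matrices $\sigma_1W^{\dagger}Q^{\dagger}\sigma_1Q\sigma_1W\sigma_1$ and $\sigma_1W^{\dagger}Q^{\dagger}\sigma_2Q\sigma_2W\sigma_1$, invokes Proposition~\ref{Prop2} in its $2\times 2$ form (common eigenvector iff the commutator has vanishing determinant), and computes that determinant explicitly to get $q_0^2q_1^2q_2^2+q_0^2q_2^2q_3^2+q_0^2q_1^2q_3^2+q_1^2q_2^2q_3^2=0$, a sum of nonnegative terms that forces two of the $q_l$ to vanish. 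You instead strip the problem down before any computation: setting $\ket{\phi}=W\tilde{R}_{j_0}\ket{M_0}$ turns the hypothesis into the statement that the four unitaries $Q_k=\sigma_kQ\sigma_k$ send one state to a common direction, and then the Pauli-twirl identity $\sum_kQ_k=4q_0 I$, the pair sums $Q_0+Q_j=2q_0 I\pm 2iq_j\sigma_j$, and the fact that two distinct anticommuting Paulis admit no common eigenvector finish the argument by a short case split. I checked the details: the four sign triples indeed have product $+1$, the $4\times 3$ system in Case~A forces $q_jy_j=0$ for all $j$, at most one $y_j$ can vanish, and in Case~B linear independence of $\ket{\phi}$ and $\ket{v}$ kills $q_0$ while the pair sums (noting $\sigma_j\ket{\phi}\neq 0$, so a nonzero $q_j$ forces $\sigma_j\ket{\phi}\propto\ket{v}$) rule out three nonzero $q_1,q_2,q_3$. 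What your approach buys is the elimination of both Shemesh's criterion and the "tedious but straightforward" determinant evaluation, replacing them with structural facts about the Pauli group, and it makes explicit that $W$ plays no role in this lemma (it is absorbed into $\ket{\phi}$); what the paper's approach buys is a single closed-form identity that is mechanically checkable and the reuse of exactly the same Proposition~\ref{Prop2} machinery in the converse direction of Theorem~\ref{Thm2}.
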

\begin{proof}
Without loss of generality, we may assume that 
for each $j\in\{1,2,3\}$, 
$\tilde{R}_j=\sigma_j$. 
Then since
\begin{equation}
\sigma_j^{\dagger}W^{\dagger}\sigma_k^{\dagger}Q\sigma_{k}W\sigma_{j}\ket{M_0}\simeq U\ket{M_0}
\label{eq:forgeable2}
\end{equation}
for all $j\in\{1,2,3\}$ and $k\in\mathbb{Z}_4$, 
it follows that
\begin{equation}
\sigma_{j'}^{\dagger}W^{\dagger}\sigma_{k'}^{\dagger}Q^{\dagger}\sigma_{k'}W\sigma_{j'}
\sigma_j^{\dagger}W^{\dagger}\sigma_k^{\dagger}Q\sigma_{k}W\sigma_{j}
\ket{M_0}\simeq\ket{M_0}
\label{eq:forgeable3}
\end{equation}
for all $j, j'\in\{1,2,3\}$ and $k, k'\in\mathbb{Z}_4$. 
In other words, 
for all $j, j'\in\{1,2,3\}$ and $k, k'\in\mathbb{Z}_4$, 
all $2\times 2$ matrices
\begin{equation}
\sigma_{j'}^{\dagger}W^{\dagger}\sigma_{k'}^{\dagger}Q^{\dagger}\sigma_{k'}W\sigma_{j'}
\sigma_j^{\dagger}W^{\dagger}\sigma_k^{\dagger}Q\sigma_{k}W\sigma_{j}
\label{eq:forgeable_matrices}
\end{equation}
have  a common eigenvector $\ket{M_0}$. 
In particular, 
the two $2\times 2$ matrices
\begin{eqnarray}
\sigma_1^{\dagger}W^{\dagger}\sigma_0^{\dagger}Q^{\dagger}\sigma_{0}W\sigma_{1}
\sigma_1^{\dagger}W^{\dagger}\sigma_1^{\dagger}Q\sigma_{1}W\sigma_{1}
&=& 
\sigma_1W^{\dagger}Q^{\dagger}\sigma_1Q\sigma_{1}W\sigma_{1}
\nonumber\\
\sigma_1^{\dagger}W^{\dagger}\sigma_0^{\dagger}Q^{\dagger}\sigma_{0}W\sigma_{1}
\sigma_1^{\dagger}W^{\dagger}\sigma_2^{\dagger}Q\sigma_{2}W\sigma_{1}
&=& 
\sigma_1W^{\dagger}Q^{\dagger}\sigma_2Q\sigma_{2}W\sigma_{1}
\label{eq:forgeable_matrices2}
\end{eqnarray}
have a common eigenvector $\ket{M_0}$. 
Thus, by Proposition~\ref{Prop2}, 
\begin{eqnarray}
0&=&
\det\left(\sigma_1W^{\dagger}Q^{\dagger}\sigma_1Q\sigma_{1}Q^{\dagger}\sigma_2Q\sigma_{2}W\sigma_{1}
-\sigma_1W^{\dagger}Q^{\dagger}\sigma_2 Q\sigma_{2}Q^{\dagger}\sigma_1Q\sigma_{1}W\sigma_{1} \right)
\nonumber\\
&=& 
q_0^2q_1^2q_2^2+q_0^2q_2^2q_3^2+q_0^2q_1^2q_3^2+q_1^2q_2^2q_3^2. 
\label{eq:forgeable_det} 
\end{eqnarray}
Hence we obtain that there exist at least two $l$, $l'$ in $\mathbb{Z}_4$ 
such that $q_l = q_{l'}=0$.
\end{proof}

We note that any $2\times 2$ unitary operator can be expressed as
the form of Eq.~(\ref{eq:Q}) up to global phase.
Thus we may assume that an assistant unitary operator $W$ is 
\begin{eqnarray}
W &=& w_0 \sigma_0+iw_1 \sigma_1-iw_2 \sigma_2 +iw_3 \sigma_3,
\label{eq:W}
\end{eqnarray}
where $w_l \in\mathbb{R}$, $w_0\ge 0$
and $\sum_{l\in\mathbb{Z}_4}w_l^2 =1$.
Then for distinct $l$, $m$, $n \in \{1,2,3\}$,
let
\begin{eqnarray}
\alpha_l
&\equiv&w_{0}^{2}+w_{l}^{2}-\frac{1}{2}
=\frac{1}{2}-w_{m}^{2}-w_{n}^{2},
\nonumber \\
\beta_m&\equiv&w_{0}w_{m}+w_{n}w_{l},
\nonumber \\
\gamma_n&\equiv&w_{0}w_{n}-w_{l}w_{m},
\label{eq:alphabetagamma}
\end{eqnarray}
and let $\mathcal{W}_{lmn}$ be the set of assistant unitary operators satisfying 
the equality $\alpha_l \beta_m \gamma_n =0$, that is, 
\begin{equation}
\mathcal{W}_{lmn}=\{W:\alpha_l \beta_m \gamma_n =0 \}. 
\label{eq:calW}
\end{equation}
\begin{Thm}\label{Thm2}
Assume that an AQS scheme consists of
the random rotation $\{\tilde{R}_j\}_{j\in\{1,2,3\}}$ with  $\tilde{R}_j\in\{\sigma_k: k\in\mathbb{Z}_4\}$
and a quantum encryption $\{\sigma_k W\}_{k\in\mathbb{Z}_4}$
with an assistant unitary operator $W$.
Then there exists at least one forgeable quantum message
if and only if 
\begin{equation}
W\in \mathcal{W}_{lmn}
\end{equation}
for some distinct $l$, $m$, $n \in \{1,2,3\}$.
\end{Thm}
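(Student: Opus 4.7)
The plan is to establish both directions of the equivalence by combining the common-eigenvector criterion of Proposition~\ref{Prop2} with the reduction on $Q$ supplied by Lemma~\ref{Lem1}.

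For the necessity direction, I assume a forgeable quantum message exists with forgery operators $Q$, $U$, and message $\ket{M_0}$. Lemma~\ref{Lem1} forces at least two of the coefficients $q_0,q_1,q_2,q_3$ in the expansion~(\ref{eq:Q}) of $Q$ to vanish. Since $Q$ is non-identity up to global phase, this leaves a short list of canonical forms: (i) $Q\propto \sigma_p$ for some $p\in\{1,2,3\}$; (ii) $Q = q_0\sigma_0 \pm iq_p\sigma_p$ with both coefficients nonzero; or (iii) $Q$ is a purely imaginary combination of two distinct Paulis $\sigma_p,\sigma_{p'}$ with $p,p'\in\{1,2,3\}$. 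For each form I substitute back into Eq.~(\ref{eq:forge_Q}), expand $W^\dagger \sigma_k^\dagger Q\sigma_k W$ using the parametrization~(\ref{eq:W}), and apply Proposition~\ref{Prop2} to pairs of matrices from the family~(\ref{eq:forgeable_matrices}) distinct from the pair already exploited in the proof of Lemma~\ref{Lem1}. Simplifying the resulting determinant conditions should expose a factorization into $\alpha_l\beta_m\gamma_n$ for some permutation $(l,m,n)$ of $(1,2,3)$, so that $W\in\mathcal{W}_{lmn}$ as claimed.

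For the sufficiency direction, I work constructively. Assume $W\in\mathcal{W}_{lmn}$ for distinct $l,m,n\in\{1,2,3\}$, so one of $\alpha_l, \beta_m, \gamma_n$ vanishes. In each of the three subcases I pick a canonical candidate forgery operator given by a single Pauli $Q = \sigma_p$ with $p\in\{l,m,n\}$ selected according to which of $\alpha_l,\beta_m,\gamma_n$ vanishes. By the identity $\sigma_k\sigma_p\sigma_k = \pm\sigma_p$, the twelve matrices $\sigma_j^\dagger W^\dagger \sigma_k^\dagger Q\sigma_k W\sigma_j$ collapse (up to global phase) to only three essentially distinct matrices indexed by $j\in\{1,2,3\}$. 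I then verify via Proposition~\ref{Prop2} that the pairwise commutators of these three matrices have vanishing determinant, forcing a common eigenvector $\ket{M_0}$; the hypothesis on $W$ is engineered precisely so this vanishing occurs, and $U$ may be taken to agree with $\sigma_1^\dagger W^\dagger \sigma_p W\sigma_1$ on $\ket{M_0}$ up to phase.

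The main obstacle is the sufficiency direction. Proposition~\ref{Prop2} for $2\times 2$ matrices is only a pairwise common-eigenvector criterion, whereas I need a single eigenvector common to three matrices simultaneously; I must therefore verify that the pairwise common eigenvectors coincide rather than merely exist separately. A secondary difficulty is recognizing the resulting polynomials in $w_0,w_1,w_2,w_3$ as a product involving $\alpha_l,\beta_m,\gamma_n$: a naive choice of matrix pair from~(\ref{eq:forgeable_matrices}) yields a correct but opaque polynomial condition, so the choice of pair and the use of $\sum_l w_l^2 = 1$ together with the Pauli conjugation identities must be coordinated carefully so that the intended factorization becomes visible.
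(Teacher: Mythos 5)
Your necessity direction is essentially the paper's own argument (Lemma~\ref{Lem1} cuts $Q$ down to the forms $q_0\sigma_0+i(-1)^{t-1}q_t\sigma_t$ or $i(-1)^{t-1}q_t\sigma_t+i(-1)^{s-1}q_s\sigma_s$, and Proposition~\ref{Prop2} applied to further pairs from the family~(\ref{eq:forgeable_matrices}) yields determinant conditions that factor through $\alpha_l\beta_m\gamma_n$), so that half is fine as a sketch. The genuine gap is in your sufficiency direction, and it is exactly the obstacle you flag but never resolve. Forgeability does not require a simultaneous eigenvector of the three matrices $A_j=\sigma_j^{\dagger}W^{\dagger}\sigma_p W\sigma_j$; it requires a state $\ket{M_0}$ on which all the $A_j$ act identically up to phase, i.e.\ a common eigenvector of the relative products $A_{j'}^{\dagger}A_j$. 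Your plan targets the stronger (and wrong) property, and the verification step you propose actually fails: take $Q=\sigma_1$ and suppose $\alpha_1=0$ with $\beta_2\gamma_3\neq 0$. Then, up to phase, $A_1\propto \beta_2\sigma_3+\gamma_3\sigma_2$ while $A_2\propto \beta_2\sigma_3-\gamma_3\sigma_2$, so $\det\left(A_1A_2-A_2A_1\right)$ is proportional to $\beta_2^2\gamma_3^2\neq 0$; by Proposition~\ref{Prop2}, $A_1$ and $A_2$ have \emph{no} common eigenvector at all, even though a forgeable message does exist in this case. So checking pairwise commutator determinants of the $A_j$ themselves cannot establish sufficiency, and the "pairwise eigenvectors coincide" issue you mention would remain open even if those determinants did vanish. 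There is also no evident rule by which letting $p$ vary with whichever of $\alpha_l,\beta_m,\gamma_n$ vanishes rescues the argument.

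What actually closes this direction is a collapse from three constraints to two, with a single fixed forgery operator. Taking $Q=\sigma_1$ (for the triple $(1,2,3)$, after relabeling), the three matrices $A_1,A_2,A_3$ are parametrized precisely by $(\alpha_1,\beta_2,\gamma_3)$, and the vanishing of any one of these numbers forces two of them to coincide up to an overall sign: $\alpha_1=0$ gives $A_3\simeq A_2$, $\beta_2=0$ gives $A_2\simeq A_1$, and $\gamma_3=0$ gives $A_3\simeq A_1$. One is then left with only two essentially distinct matrices $A$ and $B$, and the argument of Theorem~\ref{Thm1} applies verbatim: an eigenvector $\ket{M_0}$ of the unitary $A^{\dagger}B$ satisfies $A\ket{M_0}\simeq B\ket{M_0}$, and choosing $U$ to act as $A$ does on $\ket{M_0}$ produces the forgeable message--signature pair. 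On this side no appeal to Proposition~\ref{Prop2} is needed, and the simultaneity problem never arises, because the relevant state need not be an eigenvector of any single $A_j$.
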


\begin{proof}
As in the proof of Lemma~\ref{Lem1}, 
without loss of generality, we may assume that 
$\tilde{R}_j=\sigma_j$ for every $j\in\{1,2,3\}$.  
Suppose that $W\in\mathcal{W}_{lmn}$ 
for some distinct $l$, $m$, $n \in \{1,2,3\}$. 
We may also assume that $W\in\mathcal{W}_{123}$, 
without loss of generality, 
that is, $\alpha_1\beta_2\gamma_3=0$.
Note that for any unitary $W$, 
\begin{eqnarray}
\sigma_{1}^{\dagger}W^{\dagger}\sigma_{1}W\sigma_{1}
&\simeq&\sigma_{1}^{\dagger}W^{\dagger}\sigma_{k}\sigma_{1}\sigma_{k}W\sigma_{1}\nonumber\\
&\simeq& 
  \left(
    \begin{array}{cc}
      -2\beta_2 & 2\alpha_1+2i\gamma_3 \\
      2\alpha_1-2i\gamma_3 & 2\beta_2 \\
    \end{array}
  \right),\nonumber\\
\sigma_{2}^{\dagger}W^{\dagger}\sigma_{1}W\sigma_{2}
&\simeq&\sigma_{2}^{\dagger}W^{\dagger}\sigma_{k}\sigma_{1}\sigma_{k}W\sigma_{2}\nonumber\\
&\simeq& 
  \left(
    \begin{array}{cc}
      -2\beta_2 & -2\alpha_1-2i\gamma_3 \\
      -2\alpha_1+2i\gamma_3 & 2\beta_2 \\
    \end{array}
  \right),
\nonumber\\
\sigma_{3}^{\dagger}W^{\dagger}\sigma_{1}W\sigma_{3}
&\simeq&\sigma_{3}^{\dagger}W^{\dagger}\sigma_{k}\sigma_{1}\sigma_{k}W\sigma_{3}
\nonumber\\
&\simeq& 
  \left(
    \begin{array}{cc}
      2\beta_2 & -2\alpha_1+2i\gamma_3 \\
      -2\alpha_1-2i\gamma_3 & -2\beta_2 \\
    \end{array}
  \right)
\label{eq:simeq_prop}
\end{eqnarray}
for all $k\in\mathbb{Z}_4$.
Therefore, 
it follows from Eqs.~(\ref{eq:simeq_prop}) that
for each $j\in\{1,2,3\}$ and $k\in\mathbb{Z}_4$, 
the matrix 
\begin{equation}
\sigma_{j}^{\dagger}W^{\dagger}\sigma_{1}W\sigma_{j}
\simeq\sigma_{j}^{\dagger}W^{\dagger}\sigma_{k}\sigma_{1}\sigma_{k}W\sigma_{j}
\label{eq:all_forge}
\end{equation}
is one of at most two matrices up to global phase, 
since $\alpha_1\beta_2\gamma_3=0$. 
For example, 
if $\alpha_1=0$ then 
\begin{eqnarray}
\sigma_{1}^{\dagger}W^{\dagger}\sigma_{1}W\sigma_{1}
&\simeq& 
  \left(
    \begin{array}{cc}
      -2\beta_2 & 2i\gamma_3 \\
      -2i\gamma_3 & 2\beta_2 \\
    \end{array}
  \right),\nonumber\\
\sigma_{2}^{\dagger}W^{\dagger}\sigma_{1}W\sigma_{2}
&\simeq& 
  \left(
    \begin{array}{cc}
      -2\beta_2 & -2i\gamma_3 \\
      2i\gamma_3 & 2\beta_2 \\
    \end{array}
  \right)
\simeq 
  \left(
    \begin{array}{cc}
      2\beta_2 & 2i\gamma_3 \\
      -2i\gamma_3 & -2\beta_2 \\
    \end{array}
  \right)\simeq
  \sigma_{3}^{\dagger}W^{\dagger}\sigma_{1}W\sigma_{3}.
\label{eq:example_simeq_prop}
\end{eqnarray}
Hence we can easily show that there exists a forgeable quantum message, 
as in the proof of Theorem~\ref{Thm1}.

We now suppose that there exists a forgeable quantum message.
For a given assistant unitary operator $W$,
there exist a quantum message $\ket{M_0}$,
a non-identity unitary $Q=q_0 \sigma_0+iq_1 \sigma_1-iq_2 \sigma_2 +iq_3 \sigma_3$ 
and a unitary $U$
such that
\begin{equation}
\sigma_j^{\dagger}W^{\dagger}\sigma_k^{\dagger}Q\sigma_{k}W\sigma_{j}\ket{M_0}\simeq U\ket{M_0}
\label{eq:forgeable2}
\end{equation}
for all $j\in\{1,2,3\}$ and $k\in\mathbb{Z}_4$. 
Then it follows from Lemma~\ref{Lem1} that 
$q_l = q_{l'} = 0$ for some distinct $l, l'\in\mathbb{Z}_4$.
Then we have either $Q=q_0 \sigma_0+i(-1)^{t-1} q_t \sigma_t$ 
or $Q=i(-1)^{t-1} q_t \sigma_t+i(-1)^{s-1} q_{s} \sigma_{s}$ 
for distinct $t$ and $s$ in $\{1,2,3\}$.

If  $Q=q_0 \sigma_0+i(-1)^{t-1} q_t \sigma_t$ then 
we can obtain from tedious but straightforward calculations 
that $\alpha_j\beta_k\gamma_l=0$ for distinct $j, k, l\in\{1,2,3\}$.
For example,  
if $Q=q_0 \sigma_0+i q_1 \sigma_1$ then,
as in the proof of Lemma~\ref{Lem1},
the two $2\times 2$ matrices 
\begin{eqnarray}
\sigma_1W^{\dagger}\sigma_0Q^{\dagger}\sigma_{0}W\sigma_{1}
\sigma_2W^{\dagger}\sigma_0Q\sigma_{0}W\sigma_{2}
&=& 
\sigma_1W^{\dagger}Q^{\dagger}W\sigma_{1}
\sigma_2W^{\dagger}QW\sigma_{2},
\nonumber\\
\sigma_1W^{\dagger}\sigma_0Q^{\dagger}\sigma_{0}W\sigma_{1}
\sigma_3W^{\dagger}\sigma_0Q\sigma_{0}W\sigma_{3}
&=& 
\sigma_1W^{\dagger}Q^{\dagger}W\sigma_{1}
\sigma_3W^{\dagger}QW\sigma_{3}
\label{eq:forgeable_matrices2}
\end{eqnarray}
have a common eigenvector, that is, 
the two matrices 
\begin{eqnarray}
&&\left(
    \begin{array}{cc}
      q_0-2iq_1\beta_2 & -2q_1i\gamma_3+2iq_1\alpha1\\
      2q_1\gamma_3+2iq_1\alpha_1 & q_0+2iq_1\beta_2 \\
    \end{array}
  \right)^{\dagger}
  \left(
    \begin{array}{cc}
      q_0-2iq_1\beta_2 & 2q_1\gamma_3-2iq_1\alpha1\\
      -2q_1\gamma_3-2iq_1\alpha_1 & q_0+2iq_1\beta_2 \\
    \end{array}
  \right), \nonumber\\
&&\left(
    \begin{array}{cc}
      q_0-2iq_1\beta_2 & -2q_1i\gamma_3+2iq_1\alpha1\\
      2q_1\gamma_3+2iq_1\alpha_1 & q_0+2iq_1\beta_2 \\
    \end{array}
  \right)^{\dagger}
  \left(
    \begin{array}{cc}
      q_0+2iq_1\beta_2 & -2q_1\gamma_3-2iq_1\alpha1\\
      2q_1\gamma_3-2iq_1\alpha_1 & q_0-2iq_1\beta_2 \\
    \end{array}
  \right) 
\label{eq:forgeable_matrices3}
\end{eqnarray}
have a common eigenvector.
It follows from Proposition~\ref{Prop2} and tedious calculations that 
$\alpha_1\beta_2\gamma_3=0$, that is, 
$W\in \mathcal{W}_{123}$.

If  $Q=i(-1)^{t-1} q_t \sigma_t+i(-1)^{s-1} q_{s} \sigma_{s}$ then 
in the same way as above, 
we can show 
that $\alpha_3\beta_{2}\gamma_1=0$.
In particular,  
if $Q=iq_1 \sigma_1-i q_2 \sigma_2$ then 
it can be obtained that 
$\alpha_3=0$ or $\beta_{2}^2+\gamma_1^2=0$.
Therefore, we conclude that
$W\in \mathcal{W}_{321}$.
\end{proof} 

Theorem~\ref{Thm2} provides us with a perfect characterization of the AQS schemes
with forgeable quantum messages 
which we have dealt with in this paper. 
For example, if an assistant unitary operator in an AQS scheme 
is 
$H=\left(\sigma_0+i\sigma_1-i\sigma_2 +i\sigma_3\right)/2$ 
in~\cite{CCH} or 
$W_a\simeq\left(i\sigma_1-i\sigma_2 +i\sqrt{2}\sigma_3\right)/2$ 
in~\cite{ZLS,KCJL}, 
then $\alpha_1=0$ or $\alpha_3=0$, respectively, 
and hence $H\in\mathcal{W}_{123}$ and $W_a\in\mathcal{W}_{312}$. 
Therefore we can clearly obtain from Theorem~\ref{Thm2} that 
an AQS scheme which uses $H$ or $W_a$ as an assistant unitary operator 
has at least one forgeable quantum message. 

In Ref.~\cite{KCJL}, 
the AQS scheme with three or more random rotation operators 
and an assistant unitary operator 
\begin{equation}
T=\frac{i\sigma_1-i\sigma_2 +i\sigma_3}{\sqrt{3}}
\label{eq:T}
\end{equation}
was considered as 
a good candidate for the AQS scheme without forgeable quantum messages, 
and it was numerically shown that 
the AQS scheme does not generate any forgeable quantum message, 
but any analytical proof has not yet been obtained. 
However, it is readily shown that $T\notin\mathcal{W}_{lmn}$
for any distinct $l$, $m$, $n \in \{1,2,3\}$, and 
furthermore $(\sigma_0+(-1)^{t-1} i\sigma_t +(-1)^{s-1} i\sigma_s)/\sqrt{3} \notin\mathcal{W}_{lmn}$ 
for any distinct $t$, $s \in \{1,2,3\}$ and any distinct $l$, $m$, $n \in \{1,2,3\}$.
It follows that 
we can use Theorem~\ref{Thm2} to simply prove the following corollary. 
\begin{Cor}\label{Cor}
There exists an AQS scheme without any forgeable quantum message.
\end{Cor}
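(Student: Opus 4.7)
The plan is to apply Theorem~\ref{Thm2} by exhibiting a concrete AQS scheme whose assistant unitary $W$ satisfies $W\notin\mathcal{W}_{lmn}$ for every distinct $l,m,n\in\{1,2,3\}$. A natural choice, already singled out in the discussion preceding the corollary, is the scheme with three random rotations $\tilde{R}_j=\sigma_j$ ($j\in\{1,2,3\}$) and assistant unitary operator $T=(i\sigma_1-i\sigma_2+i\sigma_3)/\sqrt{3}$ proposed in~\cite{KCJL}.

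The concrete steps are few. First, read off the coefficients of $T$ in the parametrization~(\ref{eq:W}), which gives $w_0=0$ and $w_1=w_2=w_3=1/\sqrt{3}$. Next, substitute these values into the definitions~(\ref{eq:alphabetagamma}) and check that, for every permutation $(l,m,n)$ of $(1,2,3)$, each of $\alpha_l$, $\beta_m$, and $\gamma_n$ is a nonzero rational number; in particular $\alpha_l\beta_m\gamma_n\neq 0$ in all six cases. By the definition~(\ref{eq:calW}) this means $T\notin\mathcal{W}_{lmn}$ for all such triples, so the ``only if'' direction of Theorem~\ref{Thm2}, read contrapositively, yields that no forgeable quantum message exists in this scheme, which is exactly the statement of the corollary.

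There is essentially no technical obstacle, as all the nontrivial content was packaged into Theorem~\ref{Thm2}; the corollary reduces to verifying that a specific scalar $\alpha_l\beta_m\gamma_n$ does not vanish in finitely many cases. The only point requiring care is that Theorem~\ref{Thm2} presupposes three random rotations drawn from the Pauli operators, so the scheme must be instantiated with exactly $\sigma_1,\sigma_2,\sigma_3$ (and not, say, only two rotations, under which Theorem~\ref{Thm1} would force a forgery to exist). As a closing remark, I would observe that the identical computation goes through verbatim for each operator in the family $(\sigma_0+(-1)^{t-1}i\sigma_t+(-1)^{s-1}i\sigma_s)/\sqrt{3}$ noted in the discussion just above the corollary, so forgery-free qubit AQS schemes are not isolated curiosities but form a small parametric family of workable constructions.
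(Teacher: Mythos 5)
Your proposal is correct and follows essentially the same route as the paper: instantiate the scheme with rotations $\sigma_1,\sigma_2,\sigma_3$ and assistant unitary $T=(i\sigma_1-i\sigma_2+i\sigma_3)/\sqrt{3}$, compute $\alpha_l=-1/6$, $\beta_m=1/3$, $\gamma_n=-1/3$ from $w_0=0$, $w_1=w_2=w_3=1/\sqrt{3}$, conclude $T\notin\mathcal{W}_{lmn}$ for all permutations, and invoke the ``only if'' direction of Theorem~\ref{Thm2}. This is exactly the argument the paper packages into the discussion preceding the corollary, including your closing remark about the family $(\sigma_0+(-1)^{t-1}i\sigma_t+(-1)^{s-1}i\sigma_s)/\sqrt{3}$.
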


We remark that the set 
\begin{equation} 
\mathcal{W}=\left\{W: W\notin\mathcal{W}_{lmn} 
~\mathrm{for~all}~\{l, m, n\} = \{1,2,3\}\right\}
\label{eq:setW}
\end{equation}
is an infinite set, 
and hence 
there exist infinitely many AQS schemes with no forgeable quantum message.

\section{Conclusion}
In this paper, we have dealt with a recently proposed concept, 
{\em forgeable quantum messages}, in the AQS schemes, 
and have proved that 
there exists a forgeable quantum message 
in every AQS scheme with two quantum random rotations.
Furthermore, we have presented 
a necessary and sufficient condition 
for an AQS scheme with four quantum encryption operators 
and three or more quantum random rotations 
in which there exists a forgeable quantum message.
In other words,  
we have perfectly characterized assistant unitary operators in 
the quantum encryption operators which produce at least one forgeable quantum message.
In the sequel, we have analytically shown that 
there exists an AQS scheme 
without any forgeable quantum messages,
by exploiting the necessary and sufficient condition. 

In addition to the forgery problem, 
the AQS schemes may have other security problems 
which we have not considered in this paper, 
but should consider in order to obtain 
a practically useful quantum signature scheme to sign quantum messages. 
Nonetheless, our work provides 
a perfect characterization of the AQS schemes yielding forgeable quantum messages,  
although we have taken into account the specific AQS schemes.
Therefore, it could be helpful to conduct further research works related to the AQS, 
and could be also useful to improve quantum cryptographic theories.

\begin{acknowledgements}
This research was supported by 
Basic Science Research Program through the National Research Foundation of Korea (NRF) 
funded by the Ministry of  Science, ICT \& Future Planning (NRF-2016R1A2B4014928), 
and T.~Kim was supported by the Ministry of Science, ICT \& Future Planning (NRF-2013R1A1A2063279).
\end{acknowledgements}



\begin{thebibliography}{9}
\bibitem{GC} D.~Gottesman and I.L.~Chuang,
quant-ph/0105032.
%
\bibitem{WCR} T.-Y. Wang, X.-Q. Cai, Y.L. Ren, {\it et al}.,
Sci. Rep. {\bf 5}, 9231 (2015).
%
\bibitem{WMC} T.-Y. Wang, J.-F. Ma, and X.-Q. Cai, 
Quantum Inf. Comput. {\bf 16}, 19 (2017).
%
\bibitem{BCGST}
H. Barnum, C. Crepeau, D. Gottesman, A. Smith, and A. Tapp, 
Proc. 43rd Annual IEEE Symposium on the Foundations of Computer Science (FOCS '02), p. 449 (2002).
%
\bibitem{ZK} G.~Zeng and C.H.~Keitel,
Phys. Rev. A {\bf 65}, 042312 (2002).
%
\bibitem{LCL} Q.~Li, W.H.~Chan, and D.-Y.~Long,
Phys. Rev. A {\bf 79}, 054307 (2009).
%
\bibitem{CM} Z. Cao and O. Markowitch,  
Int. J. Quantum Inform. {\bf 07}, 1205 (2009).
%
\bibitem{ZQ} X.~Zou and D.~Qiu,
Phys. Rev. A {\bf 82}, 042325 (2010).
%
\bibitem{GQGW} F.~Gao, S.-J.~Qin, F.-Z.~Guo, and Q.-Y.~Wen,
Phys. Rev. A {\bf 84}, 022344 (2011).
%
\bibitem{CCH} J.W.~Choi, K.-Y.~Chang, and D.~Hong,
Phys. Rev. A {\bf 84}, 062330 (2011).
%
\bibitem{LLWZC} Q. Li, C. Li, Z. Wen, W. Zhao, and W.H. Chan,  
J. Phys. A: Math. Theor. {\bf 46}, 015307 (2013).
%
\bibitem{ZZL} K.-J. Zhang, W.-W. Zhang, and D. Li,  
Quantum Inform. Proc. {\bf 12}, 2655 (2013).
%
\bibitem{SL} Q. Su, and W.-M. Li, 
Int. J. Theor. Phys. {\bf 52}, 3343 (2013).
%
\bibitem{ZQSSS} K.-J. Zhang, S.-J. Qin, Y. Sun, T.-T. Song, and Q. Su,  
Quantum Inform. Proc. {\bf 12}, 3127 (2013).
%
\bibitem{ZLS} K. Zhang, D. Li, and Q. Su,  
Phys. Scr. {\bf 89}, 015102 (2014).
%
\bibitem{KCJL} T. Kim, J.W. Choi, N.-S. Jho, and S. Lee,  
Phys. Scr. {\bf 90}, 025101 (2015).
%
\bibitem{BR} P.O.~Boykin and V.~Roychowdhury,
Phys. Rev. A {\bf 67}, 042317 (2003).
%
\bibitem{BCWW} H.~Buhrman, R.~Cleve, J.~Watrous, and R.~de~Wolf,
Phys. Rev. Lett. {\bf 87}, 167902 (2001).
%
\bibitem{Sh} D. Shemesh, 
Linear Algebra Appl. {\bf 62}, 11 
(1984).
%

\end{thebibliography}
\end{document}